\newtheorem{theorem}{Theorem}
\theoremstyle{remark}
\newtheorem{remark}{Remark}
\theoremstyle{definition}
\newtheorem{definition}{Definition}
\begin{document}

\title{Fractional Action-Like Variational Problems\footnote{This
is a preprint of an article accepted for publication (28-04-2008)
in \emph{J. Math. Phys.},  AIP, \texttt{http://jmp.aip.org/jmp}}}

\author{Rami Ahmad El-Nabulsi\\[0.1cm]
        Plasma Application Laboratory\\
        Department of Nuclear and Energy Engineering\\
        Faculty of Mechanical, Energy and Production Engineering\\
        Cheju National University\\
        Ara-dong 1, Jeju 690-756, South Korea\\
        \texttt{nabulsiahmadrami@yahoo.fr}\\[0.3cm]
\and
        Delfim F. M. Torres\\[0.1cm]
        Centre for Research on Optimization and Control\\
        Department of Mathematics, University of Aveiro\\
        Campus Universitário de Santiago\\
        3810-193 Aveiro, Portugal\\
        \texttt{delfim@ua.pt}}

\date{}

\maketitle


\begin{abstract}
Fractional action-like variational problems have recently gained
importance in studying dynamics of nonconservative systems. In
this note we address multi-dimensional
fractional action-like problems of the calculus of variations.
\end{abstract}

\medskip

\noindent \textbf{2000 Mathematics Subject Classification:} 49K10,
49S05.

\smallskip

\noindent \textbf{Keywords:} fractional action-like variational
approach, multi-dimensional calculus of variations,
nonconservative systems.


\section{Introduction}

Fractional calculus (FC) represents a powerful tool in applied
mathematics to study a myriad of problems from different fields of
science and engineering, with many break-through results found in
mathematical physics, finance, hydrology, biophysics,
thermodynamics, control theory, statistical mechanics,
astrophysics, cosmology and bioengineering
\cite{8,7,9,5,6,6b,1,2,3,4}. Although various fields of
application of fractional derivatives and integrals are already
well established, some others have just started. An example of the
later is the study of fractional problems of the calculus of
variations and respective Euler-Lagrange type equations
\cite{10,16,17,cre,24}.

During last years, fractional variational principles (FVP) were
proposed and several applications given. Different methods were
used to obtain the fractional Euler-Lagrange equations and the
corresponding Hamiltonian canonical equations
\cite{10,15,16,17,cre,24,11,12,13,18,14}. The existence of several
different FVP and the need for still more elaboration on the
subject, both for classical and quantized systems, is partially
explained by the nonlocal nature of the fractional differential
operators used to describe the dynamics, and the form of the
corresponding adjoint operators. Another reason is the existence
of many different fractional integral operators, including the
ones of Grünwald-Letnikov, Caputo, Riesz, and
Riemann-Liouville. The Riemann-Liouville operator is one of the
most frequently used when fractional integration is performed
\cite{9,2}.

Recently, the first author has proposed a new one-dimensional (1D)
approach, known as the fractional action-like variational approach
(FALVA), in order to model nonconservative dynamical systems
\cite{20,21}. In FALVA the fractional time integral introduces
(only) one real parameter $\alpha$, and the derived Euler-Lagrange
equations are simpler and similar to the standard ones. The
novelty in the Euler-Lagrange equations is the presence of a
fractional generalized external force acting on the system
\cite{20,21,22}. In particular, no fractional derivatives appear
in the derived equations. The conjugate momentum, the Hamiltonian
and Hamilton's equations are shown to depend on the fractional
order of integration $\alpha$. Constants of motion for fractional
action-like variational problems were discussed in \cite{23,23a}
(see also \cite{23b,23c}); FALVA problems with higher-order
derivatives are studied in \cite{23a,23d}; and some encouraging
results obtained and discussed in \cite{27,29,31,33,24}. Here we
are interested to generalize FALVA for multiple fractional
action-like integrals of the calculus of variations
(multi-dimensional fractional action-like variational problems).

The text is organized as follows: in Sect.~\ref{sec:2} we review
the basic concepts of 1D-FALVA; the new extensions are found in
Sect.~\ref{sec:mr}. Since various applications of FC are based on
replacing the time derivative in dynamical equations with a
fractional derivative, in \S\ref{sec:3} we introduce the
double-weighted FVP with the recent fractional derivatives of
J.~Cresson \cite{cre}. In \S\ref{sec:4} the FVP is given in
general form, for the arbitrary $N$-dimensional case.
Sect.~\ref{sec:5} is dedicated to conclusions and future
perspectives.


\subsection*{A Note on the notation used}

In order to be clear when $f(t)$ stands for a function or the value of the function at a point $t$, we denote the function by $t \to f(t)$ and the value of the function at $t$ by $f(t)$. In the notation $t \to f(t)$, $t$ is a dummy variable. For instance, in the beginning of \S\ref{subsec:pfd} we write $(\dot{q},q,\tau) \to L(\dot{q},q,\tau)$, which represents a function of three variables. Exactly the same function can be written, for example, as $(a,b,c) \to L(a,b,c)$ ($\dot{q}$, $q$, $\tau$ or $a$, $b$, $c$ are here dummy variables). However, the dummy variables we choose in the paper are used consistently to fix the notation for the partial derivatives. For example, if we define the function as $(a,b,c) \to L(a,b,c)$, then we write $\frac{\partial L}{\partial a}$ to denote the partial derivative of function $L$ with respect to the first argument; if we define it as $(\dot{q},q,\tau) \to L(\dot{q},q,\tau)$, then the partial derivative of $L$ with respect to the first argument is denoted by $\frac{\partial L}{\partial \dot{q}}$.


\section{Brief Overview of 1D-FALVA}
\label{sec:2}

In this section we summarize the one-dimensional FALVA. The review
consists of two cases: absence of fractional derivatives
(\S\ref{subsec:afd}); and presence of Riemann-Liouville fractional
derivatives in the sense of Cresson \cite{cre}
(\S\ref{subsec:pfd}). The reader is referred respectively to
\cite{20,21} and \cite{24} for more details.


\subsection{Absence of fractional derivatives}
\label{subsec:afd}

Consider a smooth $n$-dimensional manifold $M$ (configuration
space) and denote by $L: TM \times M \times \mathbb{R} \to
\mathbb{R}$ the smooth Lagrangian function. For any smooth path
$q:\left[ {a ,b } \right] \to M$ satisfying fixed boundary
conditions $q(a)=q_a$ and $q(b)=q_b$, we define the fractional
action integral by
\begin{equation}
\label{eq:1} S^{\alpha} \left[ q \right](t) = \frac{1} {{\Gamma
(\alpha )}}\int\limits_{a }^t {L(\dot q(\tau ),q(\tau ),\tau )(t -
\tau )^{\alpha  - 1} d\tau } \, ,
\end{equation}
where $\Gamma$ is the Euler gamma function, $\dot{q} =
\frac{dq}{d\tau}$, $0 < \alpha < 1$, $\tau \in (a,t)$ is the
intrinsic time and $t \in [a,b]$ is the observer time. For a
discussion of the importance to consider a multi-time formalism,
we refer the reader to \cite{CD:UdriDu:2005,CD:UdriTe:2007}. The
Lagrangian $L(\dot q,q,\tau)$ is  weighted by ${{(t - \tau
)^{\alpha - 1} } \mathord{\left/
 {\vphantom {{(t - \tau )^{\alpha  - 1} } {\Gamma (\alpha )}}} \right.
 \kern-\nulldelimiterspace} {\Gamma (\alpha )}}$, and one
can write the smooth action integral $S^{\alpha}\left[ q
\right](t)$ as  $\int\limits_{a}^t {L(\dot q(\tau ),q(\tau ),\tau
)dg_t (\tau )}$, with an appropriate time smeared measure $dg_t
(\tau )$ on the time interval $\left[ {a,t} \right]$, or as the
Riemann-Liouville operator applied to the Lagrangian
$L(q'(t),q(t),t)$ \cite{20,21}. Here we just note that when
$\alpha \to 1$ the functional $S^{\alpha} \left[ q \right](b)$
tends to the classical action integral of the calculus of
variations:
\begin{equation*}
\lim_{\alpha \to 1} S^{\alpha} \left[ q \right](b)=
\int\limits_{a}^b {L(\dot{q}(\tau),q(\tau),\tau)d\tau } \, .
\end{equation*}
Functionals of type \eqref{eq:1} appear naturally in mathematical
economy, where they are used for describing discounting economical
dynamics; and in the theory of dynamical systems, for describing
nonlinear dissipative structures.

The Euler-Lagrange equations associated with the fractional action
integral \eqref{eq:1} take the following form: for all $t \in
[a,b]$,
\begin{equation}
\label{eq:5} \frac{{\partial L}} {{\partial q_i }} - \frac{d}
{{d\tau }}\left( {\frac{{\partial L}} {{\partial \dot{q}_i }}}
\right) = \frac{{1 - \alpha }} {{t - \tau }}\frac{{\partial L}}
{{\partial \dot{q}_i }} \, , \quad i = 1, \ldots, n \, ,
\end{equation}
where the partial derivatives of $L$ are evaluated at
$(\dot{q}(\tau),q(\tau),\tau)$, $\tau \in (a,t)$. The fractional
term $\frac{\alpha - 1}{\tau - t} \frac{{\partial L}} {{\partial
\dot{q}_i }}$ has a meaning in classical, quantum or relativistic
settings, being responsible for adding a time-dependent damping
coefficient into the dynamical equations. If we denote by $R =
{{(1 - \alpha )L} \mathord{\left/
 {\vphantom {{(1 - \alpha )L} {(t - \tau )}}} \right.
 \kern-\nulldelimiterspace} {(t - \tau )}}$
the fractional Rayleigh dissipation function and by $E$ the
Euler-Lagrange operator, \textrm{i.e.} $E =
\frac{\partial}{\partial q} - \frac{d}{d \tau}
\frac{\partial}{\partial \dot{q}}$, then equations \eqref{eq:5}
take the form $E(L) = \frac{\partial R}{\partial \dot{q}}$.
Extremals are defined as solutions of the fractional
Euler-Lagrange equations (FELE) $E(L) = \frac{\partial R}{\partial
\dot{q}}$.


\subsection{Presence of fractional derivatives}
\label{subsec:pfd}

For any smooth path $q:[a ,b] \to M$ satisfying fixed boundary
conditions $q(a) = q_a$ and $q(b) = q_b$, let the fractional
functional associated to $(\dot{q},q,\tau) \to L(\dot{q},q,\tau)$
be now defined by
\begin{equation}
\label{eq:10} S_{\gamma}^{\alpha,\beta}\left[q\right](t) =
\frac{1} {{\Gamma (\alpha )}}\int\limits_{a}^t {L(D_\gamma
^{\alpha ,\beta } q(\tau ),q(\tau ),\tau )(t - \tau )^{\alpha  -
1} d\tau } \, ,
\end{equation}
where $\left[ {a ,t } \right] \subseteq [a,b] \subset \mathbb{R}$,
and $D_\gamma ^{\alpha ,\beta }$ denotes the Riemann-Liouville
derivative of order $\left(\alpha,\beta\right)$ as defined by
J.~Cresson in \cite{cre}: for all $a$, $t \in \mathbb{R}$, $a <
t$, the fractional derivative operator of order $(\alpha ,\beta)$,
$0 < \alpha$, $\beta < 1$, is given by
\begin{equation}
\label{eq:11} D_\gamma ^{\alpha ,\beta }  = \frac{1} {2}\left[
{D_{a_ +  }^\alpha   - D_{t_ -  }^\beta  } \right] +
\frac{{i\gamma }} {2}\left[ {D_{a_ +  }^\alpha   + D_{t_ - }^\beta
} \right] \, ,
\end{equation}
with $\gamma  \in \mathbb{C}$, $i = \sqrt { - 1}$, $D_{a_ +
}^\alpha$ and $D_{t_ - }^\beta$ the usual left and right
Riemann-Liouville fractional derivatives of order $0 < \alpha,
\beta < 1$ \cite{2},
\begin{gather*}
D_{a_ +  }^\alpha  f\left( \theta \right) = \frac{1} {{\Gamma (1 - \alpha )}}{\frac{d} {{d\theta}}} \int\limits_a^\theta {f\left(\tau
\right)\left( {\theta - \tau } \right)^{- \alpha} d\tau } \, , \\
D_{t - }^\beta  f\left( \theta \right) = \frac{1} {{\Gamma (1 - \beta
)}}\left({ - \frac{d} {{d\theta}}}  \int\limits_\theta^t {f\left( \tau
\right)\left( {\tau - \theta} \right)^{- \beta} d\tau }\right) \, ,
\end{gather*}
$\theta \in [a,t]$.
For more details we refer the reader to \cite{cre}. Here we just
note that for $\gamma = i$ one has $D_\gamma ^{\alpha ,\beta } = -
D_{t - }^\beta$; for $\gamma  = - i$, $D_\gamma ^{\alpha ,\beta }
= D_{a + }^\alpha$; for $\alpha \to 1$ and $\beta \to 1$ one
obtains $D_\gamma ^{\alpha ,\beta } = \frac{d}{dt}$ and
\eqref{eq:10} is reduced to the classical functional of the
calculus of variations. In \cite{24} it is proved that the FELE
correspondent to \eqref{eq:10} is given by
\begin{equation}
\label{eq:15} \frac{{\partial L}} {{\partial q}} - D_{ -
\gamma}^{\beta ,\alpha } \left( {\frac{{\partial L}} {{\partial
\dot q}}} \right) = \frac{{1 - \alpha }} {{t - \tau
}}\frac{{\partial L}} {{\partial \dot q}}
\end{equation}
$\forall$ $\tau \in (a,t)$, where the partial derivatives of $L$
are evaluated at $(D_\gamma ^{\alpha ,\beta } q(\tau ),q(\tau
),\tau )$.

In the next section we generalize \eqref{eq:15} for FALVA problems
with multiple integrals.


\section{Main Results}
\label{sec:mr}

More generally, we discuss here the multi-dimensional fractional
variational problem or $N$D-FALVA ($N$-dimensional FALVA). We
first start by double weighted fractional integrals, \textrm{i.e.}
by the 2D-FALVA.

\subsection{Double-Weighted Fractional Variational Principles (2D-FALVA)}
\label{sec:3}

Similar to Sect.~\ref{sec:2}, we denote by $M$ a smooth
$n$-dimensional manifold; the admissible paths are smooth
functions $q:\Omega \subset \mathbb{R}^2 \to M$ satisfying fixed
Dirichlet boundary conditions on $\partial \Omega$; the Lagrangian
function $(q_x,q_y,q,x,y) \to L(q_x,q_y,q,x,y)$ is supposed to be
sufficiently smooth with respect to all its arguments; $\alpha$
and $\beta$ are two real parameters taking values on the interval
$(0,1)$.

\begin{definition}
\label{def:3.1} The 2D-FALVA action integral is defined by
\begin{equation}
\label{eq:16} \frac{1} {\Gamma \left( \alpha  \right)\Gamma
\left(\beta \right)}\iint_{\Omega(\xi,\lambda)} L(D_{\gamma
;x}^{\alpha ,\delta } q,D_{\gamma ;y}^{\beta ,\chi }
q,q,x,y)\left( {\xi - x} \right)^{\alpha  - 1} \left( {\lambda -
y} \right)^{\beta  - 1} dx dy \, ,
\end{equation}
where $\xi$ and $\lambda$ are the observer times, $(\xi,\lambda)
\in \Omega$; $x$ and $y$ are the intrinsic times, $(x,y) \in
\Omega(\xi,\lambda) \subseteq \Omega$, $x \ne \xi$ and $y \ne
\lambda$; $q = q(x,y)$; $D_{\gamma ;x}^{\delta ,\alpha }$ and
$D_{\gamma ;y}^{\chi ,\beta }$ are the fractional derivative
operators \eqref{eq:11}, respectively of order $(\delta ,\alpha)$
with respect to $x$ and of order $(\chi ,\beta )$ with respect to
$y$. We denote \eqref{eq:16} by ${S_{\gamma}^{\alpha ,\beta ,
\delta ,\chi}} \left[ q \right](\xi,\lambda)$.
\end{definition}

\begin{remark}
The classical multi-variable variational calculus has some
limitations which a multi-time variational calculus, like the one
we are proposing here, can successfully overcome. For a discussion
of the importance to consider a multi-time formalism we refer the
reader to the works of C.~Udriste and his collaborators
\cite{CD:UdriDu:2005,CD:UdriTe:2007}.
\end{remark}

The primary objective is to find functions $q = q(x,y)$ that make
the fractional action ${S_{\gamma}^{\alpha ,\beta , \delta ,\chi}}
\left[ q \right](\xi,\lambda)$ stationary for every $(\xi,\lambda)
\in \Omega$.

\begin{theorem}
\label{theo:3.1} Given a smooth Lagrangian $(q_x,q_y,q,x,y)
\rightarrow L(q_x,q_y,q,x,y)$, if $q = q(x,y)$ makes the
fractional action ${S_{\gamma}^{\alpha ,\beta , \delta ,\chi}}
\left[ q \right](\xi,\lambda)$ defined by \eqref{eq:16} stationary
for every $(\xi,\lambda) \in \Omega$, then the following
double-weighted Euler-Lagrange equation holds for every $(x,y) \in
\Omega(\xi,\lambda)$:
\begin{multline}
\label{eq:17} D_{ - \gamma ;x}^{\delta ,\alpha } \left(
{\frac{{\partial L(D_{\gamma ;x}^{\alpha ,\delta } q,D_{\gamma
;y}^{\beta ,\chi } q,q,x,y)}} {{\partial q_x }}} \right) + D_{ -
\gamma ;y}^{\chi ,\beta } \left( {\frac{{\partial L(D_{\gamma
;x}^{\alpha ,\delta } q,D_{\gamma ;y}^{\beta ,\chi } q,q,x,y)}}
{{\partial q_y }}} \right) \\
 + \frac{{1 - \alpha }}
{{\xi - x}}\left( {\frac{{\partial L(D_{\gamma ;x}^{\alpha ,\delta
} q,D_{\gamma ;y}^{\beta ,\chi } q,q,x,y)}} {{\partial q_x }}}
\right) + \frac{{1 - \beta }} {{\lambda - y}}\left(
{\frac{{\partial L(D_{\gamma ;x}^{\alpha ,\delta } q,D_{\gamma
;y}^{\beta ,\chi } q,q,x,y)}}
{{\partial q_y }}} \right) \\
 - \frac{{\partial L(D_{\gamma ;x}^{\alpha,\delta }
 q,D_{\gamma ;y}^{\beta ,\chi } q,q,x,y)}}
{{\partial q}} = 0 \, .
\end{multline}
\end{theorem}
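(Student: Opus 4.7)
The plan is a standard variational argument adapted to Cresson's fractional derivatives and the weighted multi-time measure. First I would embed the candidate extremal in a one-parameter family $q+\varepsilon\eta$, where $\eta$ is smooth, arbitrary, and vanishes on $\partial\Omega(\xi,\lambda)$. Since the operators $D^{\alpha,\delta}_{\gamma;x}$ and $D^{\beta,\chi}_{\gamma;y}$ are linear, differentiating under the integral and applying the chain rule to $L$ gives the first variation
\begin{equation*}
\frac{1}{\Gamma(\alpha)\Gamma(\beta)}\iint_{\Omega(\xi,\lambda)}\!\!\Bigl[\tfrac{\partial L}{\partial q_x}\,D^{\alpha,\delta}_{\gamma;x}\eta+\tfrac{\partial L}{\partial q_y}\,D^{\beta,\chi}_{\gamma;y}\eta+\tfrac{\partial L}{\partial q}\eta\Bigr](\xi-x)^{\alpha-1}(\lambda-y)^{\beta-1}\,dx\,dy=0,
\end{equation*}
with the partial derivatives of $L$ evaluated along $(D^{\alpha,\delta}_{\gamma;x}q,D^{\beta,\chi}_{\gamma;y}q,q,x,y)$.

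Next I would exploit the fact that the weight $(\xi-x)^{\alpha-1}(\lambda-y)^{\beta-1}$ factorizes and that the two fractional derivative operators act on independent variables. By Fubini, the treatment of the $q_x$-term and the $q_y$-term decouple, and each reduces to precisely the one-dimensional weighted integration by parts used in \S\ref{subsec:pfd} to pass from \eqref{eq:10} to \eqref{eq:15}. Applied to the $x$-component, Cresson's duality formula transfers $D^{\alpha,\delta}_{\gamma;x}$ from $\eta$ to $\partial L/\partial q_x$ as the adjoint $-D^{\delta,\alpha}_{-\gamma;x}$, while the differentiation of the weight $(\xi-x)^{\alpha-1}$ produces the extra term $\frac{1-\alpha}{\xi-x}\,\tfrac{\partial L}{\partial q_x}$. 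The same calculation in $y$ yields $-D^{\chi,\beta}_{-\gamma;y}(\partial L/\partial q_y)$ together with $\frac{1-\beta}{\lambda-y}\,\tfrac{\partial L}{\partial q_y}$. The boundary contributions vanish because $\eta$ is null on $\partial\Omega(\xi,\lambda)$.

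Collecting all contributions, the first variation rewrites as
\begin{equation*}
\iint_{\Omega(\xi,\lambda)}\eta(x,y)\,\mathcal{E}(x,y)\,(\xi-x)^{\alpha-1}(\lambda-y)^{\beta-1}\,dx\,dy=0,
\end{equation*}
where $\mathcal{E}$ is, up to an overall sign, the left-hand side of \eqref{eq:17}. Since $\eta$ is arbitrary and the weight is strictly positive on $\Omega(\xi,\lambda)$ (as $x\ne\xi$ and $y\ne\lambda$), the fundamental lemma of the calculus of variations forces $\mathcal{E}\equiv 0$, which is \eqref{eq:17}.

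The main obstacle is the weighted integration by parts: one must combine the duality formula of \cite{cre} for $D^{\alpha,\delta}_{\gamma;x}$ with the elementary identity $\partial_x(\xi-x)^{\alpha-1}=-(\alpha-1)(\xi-x)^{\alpha-2}$ to produce the damping terms with the correct signs and coefficients. Once Fubini separates the argument into the two coordinates, however, this step is exactly the one-dimensional calculation already carried out in \cite{24}, and no new analytical input is required beyond a careful bookkeeping of the adjoint operators in \eqref{eq:11}.
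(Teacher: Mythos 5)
Your proposal is correct and follows essentially the same route as the paper: compute the first variation of $q+\varepsilon\eta$ (identical to the paper's equation \eqref{eq:ant:pp}), integrate by parts so that the Cresson operators pass to their adjoints $D^{\delta,\alpha}_{-\gamma;x}$, $D^{\chi,\beta}_{-\gamma;y}$ while differentiation of the weights $(\xi-x)^{\alpha-1}$, $(\lambda-y)^{\beta-1}$ generates the damping terms, and conclude by the fundamental lemma. The only cosmetic difference is that you decouple the two coordinates via Fubini and apply the one-dimensional calculation of \cite{24} twice, whereas the paper packages the same integration by parts as a single Green's-theorem identity in the rescaled variables $\bar\xi,\bar\lambda$.
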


\begin{proof}
Let $q$ be a stationary solution, $h \ll 1$ a small real
parameter, and $w(x,y)$ an admissible variation, \textrm{i.e.} an
arbitrary smooth function satisfying $w(x,y) = 0$ for all $(x,y)
\in \partial \Omega$ so that $q + hw$ satisfies the given
Dirichlet boundary conditions for all $h$. The fractional action
${S_{\gamma}^{\alpha ,\beta ,\delta,\chi}} \left[ q + h w\right]$
can be written as
\begin{multline*}
\frac{1} {{\Gamma \left( \alpha  \right)\Gamma \left( \beta
\right)}}\iint_{\Omega(\xi,\lambda)} L(D_{\gamma ;x}^{\alpha
,\delta } q + hD_{\gamma ;x}^{\alpha ,\delta} w,D_{\gamma
;y}^{\beta ,\chi } q
+ hD_{\gamma ;y}^{\beta ,\chi } w,q + hw,x,y)\\
\times \left( {\xi - x} \right)^{\alpha  - 1} \left( {\lambda - y}
\right)^{\beta  - 1} dx dy \, ,
\end{multline*}
and the stationary condition $\left. {\frac{d}
{{dh}}{S_{\gamma}^{\alpha,\beta,\delta,\chi}}\left[ {q + hw}
\right]} \right|_{h = 0} = 0$ gives
\begin{multline}
\label{eq:ant:pp} \frac{1} {{\Gamma \left( \alpha  \right)\Gamma
\left( \beta \right)}}\iint_{\Omega(\xi,\lambda)}\left(
{w\frac{{\partial L}} {{\partial q}}} \right.  + D_{\gamma
;x}^{\alpha ,\delta } w\frac{{\partial L}} {{\partial q_x }}  +
\left. {D_{\gamma ;y}^{\beta ,\chi } w\frac{{\partial L}}
{{\partial q_y }}}
\right)\\
\times \left( {\xi - x} \right)^{\alpha  - 1} \left(
{\lambda - y} \right)^{\beta  - 1} dx dy = 0\, ,
\end{multline}
where the partial derivatives of function $(q_x,q_y,q,x,y)
\rightarrow L(q_x,q_y,q,x,y)$ are evaluated at $(D_{\gamma
;x}^{\alpha ,\delta } q(x,y) ,D_{\gamma ;y}^{\beta ,\chi }
q(x,y),q(x,y),x,y)$. Using integration by parts and Green's
theorem, we know that
\begin{multline*}
\iint_{\Omega(\xi,\lambda)} \left( {\frac{{\partial P}} {{\partial
\xi}}G_1  + \frac{{\partial P}}
{{\partial \lambda}}G_2 } \right)d\bar \xi d\bar \lambda \\
= \oint_{\partial \Omega } P\left( { - G_2 d\bar \xi + G_1 d\bar
\lambda} \right) - \iint_{\Omega(\xi,\lambda)}\left( {P\left(
{\frac{{\partial G_1 }} {{\partial \xi}} + \frac{{\partial G_2 }}
{{\partial \lambda}}} \right)} \right)d\bar \xi d\bar \lambda
\end{multline*}
for any smooth functions $G_1$ and $G_2$, where
\begin{equation*}
\begin{split}
\Gamma \left( {1 + \alpha } \right)\bar \xi
&= \xi^\alpha   - \left( {\xi - x} \right)^\alpha \, ,\\
\Gamma \left( {1 + \alpha } \right)\bar \lambda &= \lambda^\beta -
\left( {\lambda - y} \right)^\beta \, .
\end{split}
\end{equation*}
We conclude from \eqref{eq:ant:pp} that
\begin{multline*}
- \frac{1} {{\Gamma \left( \alpha  \right)\Gamma \left( \beta
\right)}} \iint_{\Omega(\xi,\lambda)} w\Biggl[ {\left( {\xi - x}
\right)^{\alpha - 1} \left( {\lambda - y} \right)^{\beta  - 1} }
\times \left( {D_{\gamma ;x}^{\alpha ,\delta } \left(
{\frac{{\partial L}} {{\partial q_x }}} \right)} \right.  \left. {
+ D_{\gamma ;y}^{\beta ,\chi } \left( {\frac{{\partial L}}
{{\partial q_y }}} \right)} \right)  \\
+ (1 - \alpha )\left( {\frac{{\partial L}} {{\partial q_x }}}
\right)\left( {\xi - x} \right)^{\alpha - 2} \left( {\lambda - y}
\right)^{\beta  - 1} + (1 - \beta )\left( {\frac{{\partial L}}
{{\partial q_y }}} \right)\left( {\xi - x} \right)^{\alpha - 1}
\left( {\lambda - y} \right)^{\beta  - 2} \\
 { - \frac{{\partial L}}
{{\partial q}}\left( {\xi - x} \right)^{\alpha  - 1} \left(
{\lambda - y} \right)^{\beta  - 1} } \Biggr] dx dy = 0
\end{multline*}
and then, because of the arbitrariness of $w(x,y)$ inside
$\Omega(\xi,\lambda)$, it follows \eqref{eq:17}, which is the
2D-FELE.
\end{proof}

We expect that fractional variational problems involving multiple
integrals may have important consequences in mechanical problems
involving dissipative systems with infinitely many degrees of
freedom.


\subsection{N-Weighted Fractional Variational Principles (ND-FALVA)}
\label{sec:4}

All the arguments of \S\ref{sec:3} can be repeated, \emph{mutatis
mutandis}, to the $N$-dimensional situation when the admissible
paths are smooth functions $q:\Omega  \subset \mathbb{R}^N  \to M$
satisfying given Dirichlet boundary conditions on $\partial
\Omega$.

\begin{definition}
\label{Definition4.3} Consider a smooth manifold $M$ and let
$(q_{x_1},\ldots,q_{x_N},q,x_1,\ldots,x_N) \rightarrow
L(q_{x_1},\ldots,q_{x_N},q,x_1,\ldots,x_N)$ be a sufficiently
smooth Lagrangian function. The $N$D-FALVA functional is defined
by
\begin{equation*}
{S_{\gamma}^{\alpha,\delta}}\left[ q \right](\xi) =
\frac{1} {{\prod\limits_{i = 1}^N {\Gamma \left( {\alpha _i }
\right)}} } \int {\cdots\int_{\Omega(\xi)} } L\left(\nabla_\gamma
^{\alpha,\delta} q(x),q(x),x\right) \prod\limits_{i = 1}^N {\left(
{\xi_i  - x_i } \right)^{\alpha _i - 1} } dx   \, ,
\end{equation*}
where $x= (x_1,\ldots,x_N)$ is the intrinsic time vector, $\xi=
(\xi_1,\ldots,\xi_N) \in \Omega$ the observer time vector, $x \in
\Omega(\xi) \subseteq \Omega$ with $x_i \ne \xi_i$
($i=1,\ldots,N$), $dx = dx_1 \cdots dx_N$, $\alpha =
(\alpha_1,\ldots,\alpha_N)$, $\delta =
(\delta_1,\ldots,\delta_N)$, $0 < \alpha _i< 1$ ($i = 1, \ldots
N$), and $\nabla_\gamma^{\alpha,\delta} = (D_{\gamma;x_1}^{\alpha
_1 ,\delta_1 } ,\ldots,D_{\gamma;x_N}^{\alpha_N ,\delta_N })$.
\end{definition}

\begin{theorem}
\label{Theorem4.2} The $N$-dimensional Euler-Lagrange equation
associated to the fractional functional
${S_{\gamma}^{\alpha,\delta}}\left[ q \right](\xi)$, $\xi \in
\Omega$, is given by
\begin{equation*}
\sum\limits_{i = 1}^N \Biggl[ {D_{ - \gamma ;x_i
}^{\delta _i ,\alpha _i } \left( {\frac{{\partial L}} {{\partial
q_{x_i}}}} \right)} + {\frac{{1 - \alpha _i }} {{\xi_i  - x_i
}}\left( {\frac{{\partial L}} {{\partial q_{x_i}}}} \right)}
\Biggr] - \frac{{\partial L}} {{\partial q}} = 0 \, ,
\end{equation*}
where all partial derivatives of $L$ are evaluated at
$\left(\nabla_\gamma ^{\alpha ,\delta} q(x), q(x), x\right)$, $x
\in \Omega(\xi)$.
\end{theorem}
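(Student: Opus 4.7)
The plan is to repeat the argument of Theorem \ref{theo:3.1} in $N$ variables, the only substantive change being that the two-fold integration by parts via Green's theorem is replaced by $N$ successive partial integrations (equivalently, by the divergence theorem on $\Omega(\xi) \subset \mathbb{R}^N$). I take a stationary $q$, a small real parameter $h$, and an admissible variation $w : \Omega \to \mathbb{R}$ which is smooth and vanishes on $\partial\Omega$, so that $q+hw$ satisfies the same Dirichlet data. Since each $D_{\gamma;x_i}^{\alpha_i,\delta_i}$ is linear, $\nabla_\gamma^{\alpha,\delta}(q+hw) = \nabla_\gamma^{\alpha,\delta} q + h \nabla_\gamma^{\alpha,\delta} w$, so differentiating $S_\gamma^{\alpha,\delta}[q+hw](\xi)$ at $h=0$ gives
\begin{equation*}
\frac{1}{\prod_{i=1}^N \Gamma(\alpha_i)} \int\cdots\int_{\Omega(\xi)} \Biggl( w\,\frac{\partial L}{\partial q} + \sum_{i=1}^N D_{\gamma;x_i}^{\alpha_i,\delta_i} w \,\frac{\partial L}{\partial q_{x_i}} \Biggr) \prod_{i=1}^N (\xi_i - x_i)^{\alpha_i - 1}\, dx = 0,
\end{equation*}
with the partial derivatives of $L$ evaluated at $(\nabla_\gamma^{\alpha,\delta} q(x), q(x), x)$.

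Next, I would integrate by parts in each of the $N$ coordinates $x_i$ separately to move the fractional operator off $w$. For the $i$-th term, using the adjoint relation for Cresson's fractional operator (the same identity that converts $D_\gamma^{\alpha,\beta}$ into $D_{-\gamma}^{\beta,\alpha}$ in \eqref{eq:15}), combined with the product-rule differentiation of the weight $(\xi_i - x_i)^{\alpha_i-1}$, produces exactly two contributions: the bulk term $D_{-\gamma;x_i}^{\delta_i,\alpha_i}\!\left(\partial L/\partial q_{x_i}\right)(\xi_i - x_i)^{\alpha_i - 1}$ and the weight-derivative term $(1-\alpha_i)(\xi_i-x_i)^{\alpha_i - 2}\,\partial L/\partial q_{x_i}$. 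The remaining coordinates contribute only their factors $(\xi_j - x_j)^{\alpha_j - 1}$, which are common to every term. The boundary contributions on $\partial\Omega$ vanish because $w \equiv 0$ there, as in the 2D computation.

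Factoring $w$ out and collecting, the stationarity condition becomes
\begin{equation*}
- \frac{1}{\prod_{i=1}^N \Gamma(\alpha_i)} \int\cdots\int_{\Omega(\xi)} w \Biggl\{ \sum_{i=1}^N \Bigl[ D_{-\gamma;x_i}^{\delta_i,\alpha_i}\!\Bigl(\tfrac{\partial L}{\partial q_{x_i}}\Bigr) + \tfrac{1-\alpha_i}{\xi_i - x_i}\tfrac{\partial L}{\partial q_{x_i}} \Bigr] - \tfrac{\partial L}{\partial q} \Biggr\} \prod_{i=1}^N (\xi_i - x_i)^{\alpha_i - 1} dx = 0.
\end{equation*}
Since the weight $\prod_i (\xi_i - x_i)^{\alpha_i - 1}$ is strictly positive on the interior of $\Omega(\xi)$ and $w$ is an arbitrary smooth function vanishing on $\partial\Omega$, the fundamental lemma of the calculus of variations forces the bracketed expression to vanish pointwise on $\Omega(\xi)$, which is precisely the stated $N$D-FELE.

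The main obstacle is the $N$-dimensional integration by parts for the Cresson fractional operators in the presence of the singular weights $(\xi_i - x_i)^{\alpha_i - 1}$: one must verify that the weight-derivative accounts exactly for the $\frac{1-\alpha_i}{\xi_i - x_i}$ correction and that no cross-boundary terms survive when several $x_i$-integrations are iterated. This is purely a coordinatewise iteration of the identity already used to establish \eqref{eq:17}, so no new analytic input beyond the 2D proof is required; the argument is genuinely \emph{mutatis mutandis}.
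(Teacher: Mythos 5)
Your proposal is correct and matches the paper's intent exactly: the paper gives no separate proof of Theorem~\ref{Theorem4.2}, stating only that the arguments of \S\ref{sec:3} carry over \emph{mutatis mutandis}, which is precisely the coordinatewise repetition of the 2D argument (first variation, integration by parts with the adjoint of Cresson's operator, product rule on the weights $(\xi_i-x_i)^{\alpha_i-1}$, and the fundamental lemma) that you carry out.
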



\section{Conclusions and Further Work}
\label{sec:5}

In this work we introduce the multi-dimensional FALVA setting and
derive the corresponding multi-dimensional fractional
Euler-Lagrange equations. Obtained Euler-Lagrange equations are
enough complicated, and one expects solutions to be found using
numerical techniques. A work in this direction is in progress. We
expect that fractional variational problems involving multiple
integrals will have important consequences in mechanical problems
and optimal control theory involving dissipative systems with
infinitely many degrees of freedom. A geometric formulation of the field equations for the fractional action-like variational
formalism, in terms of multi-vector fields on tangent bundles, is
under investigation.

In our paper, as well as in all previous works on fractional Euler-Lagrange equations we are aware of, it is assumed that at least one stationary point for the fractional functional exist.
Euler-Lagrange equations are valid under this assumption.
The question of obtaining conditions on the Lagrangian $L$
assuring the existence of stationary trajectories is, to the best of our knowledge, a completely open question in the fractional setting. This is a pertinent question because even very simple
Lagrangians, \textrm{e.g.} $L = D_\gamma ^{\alpha ,\beta } q$,
fail to satisfy the hypotheses under which the Euler-Lagrange equations are valid.


\section*{Acknowledgements}

The first author was supported by the Ministry of Commerce,
Industry and Energy, Korea, through the department of Nuclear
Engineering of the Cheju National University; the second author
was supported by the \emph{Portuguese Foundation for Science and
Technology} (FCT) through the \emph{Centre for Research in
Optimization and Control} (CEOC) of the University of Aveiro,
cofinanced by the European Community Fund FEDER/POCI 2010.


{\small

}


\end{document}